\title{Mathematical Properties of the Zadoff--Chu~Sequences}
\author{David~Gregoratti, Xavier~Arteaga, and~Joaquim~Broquetas}
\address{Software Radio Systems (SRS), Barcelona, Spain}
\email{\{first.last\}@srs.io}
\newcommand{\fig}{Figure}
\newcommand{\dei}{\,\mathrm{d}}
\newcommand{\me}{\mathrm{e}}
\newcommand{\mj}{\mathrm{j}}
\newcommand{\abs}[1]{\lvert #1 \rvert}
\newcommand{\Bigabs}[1]{\Bigl\lvert #1 \Bigr\rvert}
\newcommand{\biggabs}[1]{\biggl\lvert #1 \biggr\rvert}
\newcommand{\zz}{\mathbb{Z}}
\newcommand{\nzc}{N_{\text{\upshape ZC}}}
\newtheorem{claim}{Claim}
\newtheorem{proposition}{Proposition}
\begin{document}

\begin{abstract}
This paper is a compilation of well-known results about Zadoff--Chu sequences,
including all proofs with a consistent mathematical notation, for easy reference.
Moreover, for a Zadoff--Chu sequence $x_u[n]$ of prime length $\nzc$ and root index $u$, a
formula is derived that allows computing the first term (frequency zero) of its
discrete Fourier transform, $X_u[0]$, with constant complexity independent of
the sequence length, as opposed to accumulating all its $\nzc$ terms.
The formula stems from a famous result in analytic number theory and
is an interesting complement to the fact that the discrete Fourier transform of
a Zadoff--Chu sequence is itself a Zadoff--Chu sequence whose terms are scaled
by $X_u[0]$. Finally, the paper concludes with a brief analysis of
time-continuous signals derived from Zadoff--Chu sequences, especially those
obtained by OFDM-modulating a Zadoff--Chu sequence.
\end{abstract}

\maketitle

\section{Introduction}
Sequences whose periodic autocorrelation nulls out everywhere but at the
zero-lag point play a fundamental role in several applications such as
synchronization, spread spectrum signaling, random multiple access, and radar.
A notable class of sequences with this ideal autocorrelation property consists
of the Zadoff--Chu (ZC) sequences, which were introduced by Chu \cite{J:Chu1972},
building on the works of Frank and Zadoff \cite{J:Frank1962}, as well as Heimiller
\cite{J:Heimiller1961}. Indeed, as outlined in the next section, ZC sequences
are endowed with a number of desirable features on top of their perfect
autocorrelation, and they are widely employed in the latest cellular
communications standards (namely, LTE and 5G NR) enabling functionalities like
synchronization, random access, channel sounding, and data spreading in control
channels. The interested reader is referred to \cite{arxiv:Andrews2022} and references therein
for a more detailed account of the uses of ZC sequences in the 3GPP standards.

Since modern cellular systems rely on OFDM (Orthogonal Frequency-Division
Multiplexing) modulation, the problem of computing the Discrete Fourier Transform
(DFT) of a ZC sequence is particularly relevant.
Moreover, the 3GPP specifies ZC sequences whose length $\nzc$ is a prime number.
As a result, the complexity of obliviously taking a DFT is significantly
higher than the FFT algorithm, namely $O(\nzc \log \nzc)$, see \cite{B:Tolimieri1989}.
Precomputing all the values for all possible sequences of length $\nzc$ is also
not an option, since it requires storing up to $\nzc(\nzc-1)$ complex
floating-point values, with $\nzc$ in the order of thousands. Fortunately, it can be
shown that the DFT of a ZC sequence of prime length is itself a ZC sequence
scaled both in amplitude and time~\cite{J:Beyme2009, J:Li2007}, thus reducing the complexity to
simply $O(\nzc)$.  More specifically, computing the DFT consists of two steps, both of
complexity $O(\nzc)$: first, compute the zero-frequency coefficient of the DFT
and, second, use this coefficient to scale the new ZC sequence.

Once a ZC sequence is OFDM-modulated, it results in a time-continuous signal that still
maintains some of the properties of the discrete sequence, although in a weaker
form. Specifically, the signal shows a low peak-to-average power ratio (as
compared to other OFDM signals) and its zero-lag autocorrelation is still
significantly higher than all other local maxima.

In this context, the purpose of this paper is three-fold. First, it aims to
compile several results regarding ZC sequences in a single, self-contained
document with uniform notation, which, we hope, will help the reader to have a
better overview of the topic.

The second objective is to show, in Section~\ref{sec:dft_zc}, that the
zero-frequency coefficient of the DFT of a ZC sequence can be computed by means
of a \emph{plug-and-play} formula of complexity $O(1)$, which simplifies even
further the computation and the memory requirements. Since our solution only
involves the the first of the two steps described above, the final complexity of
the DFT is still $O(\nzc)$, but, nonetheless, several times faster. As a side
note, it is worth remarking that real-time cellular systems (especially those
running as a software-defined radio platform) do benefit from all sorts of minor
optimizations on the long run.

The third and final objective is to provide, in Section~\ref{sec:continous}, an
initial characterization of continuous signals derived from ZC sequences, as the
ones resulting from their OFDM modulation.

\section{Definition and Correlation Properties}\label{sec:definition}
This section is devoted to the formal definition of the ZC sequences as well as
to an overview of their main properties. Even though these properties are well
known, we report the proofs in Appendix~\ref{apdx:easy_proofs} for completeness.

In its most general form, the ZC sequence of period $\nzc$ and root $u$, with
relatively prime $\nzc, u \in \mathbb{N}$, $1 \le u \le \nzc - 1$, reads
\begin{equation}\label{eq:zc_full}
x_u[n] = \me^{-\mj \pi u \frac{n(n+c+2q)}{\nzc}},\qquad n\in\zz
\end{equation}
where $c\equiv\nzc\bmod 2$, and $q$ is any integer. One can easily show (see
Claim~\ref{claim:periodic} in Appendix~\ref{apdx:easy_proofs}) that $x_u[n]$
is indeed periodic of period $\nzc$. It is also evident that $x_u[n]$ is
unimodular.

As mentioned in the introduction, the most appealing property of the ZC
sequences concerns their autocorrelation, namely
\[
R_{u,u}[\tau] = \sum_{n=0}^{\nzc-1} x_u[n] x_u^*[n + \tau]
  = \nzc \delta_{\nzc | \tau}, \qquad \tau \in \zz
\]
where $^*$ denotes complex conjugate and where we introduced the Kronecker delta
symbol $\delta_{\nzc | \tau} = 1$ if $\tau \equiv 0 \pmod \nzc$ and zero
otherwise (see Claim~\ref{claim:autocorr} in Appendix~\ref{apdx:easy_proofs}).
This ideal autocorrelation property, together with unimodularity, categorizes ZC
sequences as CAZAC sequences (i.e., constant amplitude zero autocorrelation).

Although not zero, the cross-correlation between two ZC sequences with different
roots is also well behaved, namely
\[
\abs{R_{u,v}[\tau]} = \sum_{n=0}^{\nzc-1} x_u[n] x_v^*[n + \tau] = \sqrt{\nzc}
\]
independently of $\tau$ and as long as $u-v$ is relatively prime to~$\nzc$ (see
Claim~\ref{claim:xcorr} in Appendix~\ref{apdx:easy_proofs}).

Having defined the ZC sequences and recalled their main properties, we are now
ready to move on to the study of the DFT of a ZC sequence and to the main result
of the paper. Note that, hereafter, we will always consider the special case in
which $\nzc$ is an odd prime number and $q=0$ (indeed, it is straightforward to
show---see Claim~\ref{claim:no_q} in Appendix~\ref{apdx:easy_proofs}---that $q$
only introduces a shift and a constant rotation of the terms of the sequence
and, thus, is irrelevant for our purposes). In other words, we will use
\begin{equation}\label{eq:zc}
x_u[n] = \me^{-\mj \pi u \frac{n(n+1)}{\nzc}}
\end{equation}
where $\nzc$ is prime and $1 \le u \le \nzc - 1$ (the condition of $u$ and
$\nzc$ being coprime is now implicit).

\section{The DFT of a ZC Sequence}\label{sec:dft_zc}

ZC sequences of odd prime length, viz.~(\ref{eq:zc}), benefit from an additional
property: Their DFT (and inverse DFT) is itself a ZC sequence, scaled both in
amplitude and time. Namely, as proven in, e.g., \cite{J:Beyme2009} (see also
Claim~\ref{claim:dft} in Appendix~\ref{apdx:easy_proofs}),
\begin{equation}\label{eq:dft_def}
X_u[k] = \sum_{n=0}^{\nzc-1} x_u[n] \me^{-\mj 2 \pi \frac{kn}{\nzc}}
  = X_u[0] x_u^*\bigl[u^{-1}k\bigr]
\end{equation}
where $X_u[\cdot]$ denotes the DFT of $x_u[\cdot]$ and $u^{-1}$ is the inverse
of $u$ modulo $\nzc$, that is $u^{-1} u \equiv 1 \pmod \nzc$, which is well
defined since $\nzc$ is prime.

As mentioned in the introduction, this property makes up for the diminished performance
of DFT/FFT algorithms applied to sequences of prime length \cite{B:Tolimieri1989}, thus
facilitating the adoption of ZC sequences in modern communications systems whose
physical layer relies on OFDM modulation (e.g., 5G New Radio, see \cite{TS38211}).
However, one still needs to compute the zero-frequency term of the DFT by
accumulating all terms of the original ZC sequence:
\begin{equation}\label{eq:dft0}
X_u[0] = \sum_{n=0}^{\nzc-1} x_u[n]
  = \sum_{n=0}^{\nzc-1} \me^{-\mj \pi u \frac{n(n+1)}{\nzc}}.
\end{equation}

The cost of computing this coefficient by taking the sum on the right-hand side of the
previous equation is $O(\nzc)$ and, in general, it must be repeated for all
$\nzc-1$ possible values of the root index $u$. Fortunately, the following
result, due to Gauss, provides a plug-in formula that allows us to compute the
coefficient $X_u[0]$  with a fixed number of operations. Surprisingly, there
seems to be no mention to this result in the wireless communications or signal
processing literature.

\begin{proposition}[Generalized quadratic Gauss sum]\label{prop:main}
For all prime $\nzc$, one has
\[
X_u[0] =
\ell_{2u} \eta_{\nzc} \sqrt{\nzc}\; \me^{\mj\frac{2\pi u}{\nzc}\bigl(\frac{\nzc+1}{2}\bigr)^3}
\]
where $\ell_n$ stands for the Legendre symbol\footnote{To avoid confusion with
classic fractions and parenthesis, we prefer not to use the standard notation
$(\frac{n}{\nzc})$ for the Legendre symbol of $n$ with respect to $\nzc$.} of
$n$ with respect to $\nzc$, namely
\[
\ell_n = \begin{cases}
0 & \text{if $\nzc$ divides $n$} \\
1 & \text{if $n \not\equiv 0$ is a perfect square modulo $\nzc$} \\
-1 & \text{if $n \not\equiv 0$ is not a perfect square modulo $\nzc$}
\end{cases}
\]
and where we introduced the coefficient
\[
\eta_{\nzc} = \begin{cases}
1 & \text{if $\nzc \equiv 1 \pmod 4$} \\
-\mj & \text{if $\nzc \equiv 3 \pmod 4$.}
\end{cases}
\]
\end{proposition}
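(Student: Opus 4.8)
The plan is to reduce the sum in~(\ref{eq:dft0}) to a standard quadratic Gauss sum and then invoke the classical evaluation of the latter. First I would complete the square in the exponent: writing $n(n+1) = (n + 2^{-1})^2 - 4^{-1}$ modulo $\nzc$ (the inverse of $2$ exists since $\nzc$ is an odd prime), the exponent $-\pi u n(n+1)/\nzc$ becomes $-\pi u (n+2^{-1})^2/\nzc + \pi u 4^{-1}/\nzc$. To make this rigorous with the factor $\pi$ rather than $2\pi$, note that for odd $\nzc$ one has $\me^{-\mj \pi u m/\nzc} = \me^{-\mj \pi u m (\nzc+1)/\nzc} \cdot \me^{\mj \pi u m}$ when convenient, or more simply observe that $1/2 \equiv (\nzc+1)/2 \pmod{\nzc}$ so that $-\pi u n(n+1)/\nzc \equiv -2\pi u \cdot \tfrac{\nzc+1}{2}\cdot \tfrac{n(n+1)}{\nzc}$ modulo $2\pi$; this is exactly where the cube $\bigl(\tfrac{\nzc+1}{2}\bigr)^3$ will eventually surface, since $4^{-1} \equiv \bigl(\tfrac{\nzc+1}{2}\bigr)^2$ and the outer factor contributes one more power of $\tfrac{\nzc+1}{2}$.

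The key steps, in order, are: (i) rewrite $X_u[0] = \sum_{n} \me^{-\mj 2\pi a n^2/\nzc}\,\me^{-\mj 2\pi b n/\nzc}$ for suitable residues $a,b$ depending on $u$ — after completing the square the linear term disappears and one is left with a pure quadratic Gauss sum times a constant phase; (ii) reindex the sum (a shift $n \mapsto n - 2^{-1}$ is a bijection of $\zz/\nzc\zz$, so it leaves the sum unchanged) to obtain $C \cdot \sum_{n=0}^{\nzc-1} \me^{-\mj 2\pi \tilde{a} n^2/\nzc}$ where $C$ collects the constant phase coming from the $-4^{-1}$ term; (iii) pull out the Legendre symbol: the standard fact is $\sum_{n} \me^{2\pi\mj a n^2/\nzc} = \ell_a \sum_n \me^{2\pi\mj n^2/\nzc}$ for $\nzc \nmid a$, which follows because $n \mapsto an^2$ hits each quadratic residue class the same number of times; (iv) substitute the classical Gauss-sum value $\sum_{n=0}^{\nzc-1}\me^{2\pi\mj n^2/\nzc} = \eta_{\nzc}\sqrt{\nzc}$ (respectively its conjugate for the $\me^{-\mj(\cdots)}$ convention, which flips $\eta_{\nzc}$ appropriately); (v) reassemble $C$, $\ell$, $\eta_{\nzc}$, and $\sqrt{\nzc}$ and simplify the accumulated phase into the stated form $\me^{\mj \frac{2\pi u}{\nzc}(\frac{\nzc+1}{2})^3}$, checking that the sign and the index of the Legendre symbol ($2u$ rather than $u$) come out correctly.

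I would treat the evaluation of the classical quadratic Gauss sum $\sum_{n=0}^{\nzc-1}\me^{2\pi\mj n^2/\nzc} = \eta_{\nzc}\sqrt{\nzc}$ as a black box (it is the "famous result in analytic number theory" alluded to in the abstract; a self-contained proof via the finite Fourier transform or contour integration is standard but long), and similarly I would cite, or prove in one line, the multiplicativity property $\sum_n \me^{2\pi\mj an^2/\nzc} = \ell_a\,(\text{that sum})$. With those two inputs, the proposition is essentially bookkeeping of phases.

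The main obstacle I anticipate is precisely this phase bookkeeping: reconciling the $\pi$ in~(\ref{eq:zc}) with the $2\pi$ of the standard Gauss sum, keeping track of the sign of the exponent (the paper's DFT uses $\me^{-\mj 2\pi kn/\nzc}$ and the ZC definition uses $\me^{-\mj\pi u\,n(n+1)/\nzc}$, so there are two minus signs interacting), and showing that all the scattered phase factors — the $\me^{\mj\pi u/(4\nzc)}$-type term from completing the square, plus whatever $\ell$/$\eta_{\nzc}$ conventions contribute — collapse exactly to $\me^{\mj\frac{2\pi u}{\nzc}(\frac{\nzc+1}{2})^3}$ with no leftover root of unity. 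Getting the Legendre symbol to read $\ell_{2u}$ rather than $\ell_u$ or $\ell_{-u}$ is the subtle point here, and it hinges on how the factor $2$ from $2^{-1}$ and the sign from the $-\mj$ exponent convention combine; I would verify the final identity by spot-checking a small prime such as $\nzc = 5$ or $\nzc = 7$ against direct summation.
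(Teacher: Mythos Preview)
Your proposal is correct and follows essentially the same route as the paper: set $\alpha = (\nzc+1)/2 \equiv 2^{-1}$, use it to convert the $\pi$ in the exponent to $2\pi$, complete the square to extract the $\me^{\mj 2\pi u\alpha^3/\nzc}$ phase, shift the summation index, and then invoke the two black-box Gauss-sum results you name. For the point you flag as subtle, the paper dispatches it cleanly by writing $\ell_{-u\alpha} = \ell_{-1}\ell_u\ell_\alpha = \ell_{-1}\ell_{2u}$ (using $\ell_{2^{-1}} = \ell_2$) and then absorbing $\ell_{-1}$ into $g_{\nzc}$ to yield exactly $\eta_{\nzc}\sqrt{\nzc}$, so no numerical spot-check is needed.
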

\begin{proof}
See Appendix~\ref{apdx:generalized_quad_gauss}.
\end{proof}

One readily sees that the above formula only requires a small number of
operations that is independent of the length of the ZC sequence. The only minor
inconvenience arises from the Legendre symbols $\ell_{n}$, since computing each
one of them has complexity $O(\log(n)\log(\nzc))$ \cite{B:BachShallit1996}.
However, they can be computed offline and stored in a look-up table of just
$\nzc - 1$ bits, given their binary nature after excluding the trivial case
$n=0$. Such a table is extremely lightweight,
especially when compared with the amount of memory required to store all the
possible $X_u[0]$ as complex, floating point values.

\begin{figure}
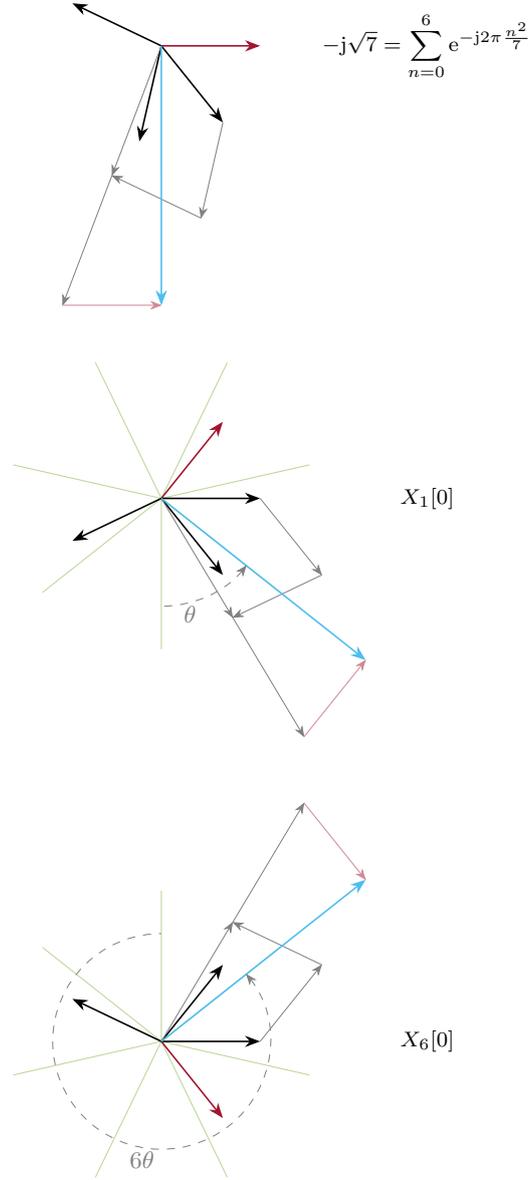

\centering
\tikz{}
\caption{Comparing $-\mj \sqrt{7} = \sum_{n=0}^{6}\me^{-\mj 2 \pi
\frac{n^2}{7}}$ (top) and $X_u[0]$, for $u=1$ (middle) and $u=6$
(bottom). The black and red vectors represent the summation
terms in (\ref{eq:dft0}).
The cyan vector is the sum of two copies of the black vectors and one copy of
the red one. The angle between each two consecutive lines in the green polar
grids is $\theta = 2 \pi (7+1)^3 / 8 / 7 \equiv 2 \pi / 7$ radians. When $u=1$,
$\ell_{2\cdot 1} = 1$ and $X_1[0]$ can be achieved as a rotation of
$-\mj \sqrt{7}$ by $\theta$. When $u=6$, $\ell_{2\cdot 6} = -1$ and $X_6[0]$
is obtained from $-\mj \sqrt{7}$ as a reflection across the horizontal axis (conjugation)
followed by a rotation of $6\theta$. Rotations are measured counterclockwise.}
\label{fig:dft}
\end{figure}

Moreover, when combined with (\ref{eq:zc}) and (\ref{eq:dft_def}),
Proposition~\ref{prop:main} implies that any term of the DFT of a ZC sequence
can be computed by a number of simple integer operations, a few real
multiplications and a single complex exponential, namely
\begin{align*}
X_u[k] &= \sqrt{\nzc} \cdot \me^{\mj \frac{\pi}{2\nzc} \chi} \\
\chi &= \biggl[4u\biggl(\frac{\nzc+1}{2}\biggr)^3 - \ell_{2u} \nzc
     + 2 k(u^{-1}k + 1)\biggr] \bmod (4\nzc)
\end{align*}
(this is the case for $\nzc \equiv 3 \pmod 4$, a similar expression can be
derived in the other case).

Another interesting interpretation of Proposition~\ref{prop:main} is that $X_u[0]$ can
be computed by taking $\ell_{2u} \eta_{\nzc} \sqrt{\nzc}$, seen as a vector in
the complex plane, and rotating it counterclockwise by $u$
steps of $2\pi(\nzc+1)^3/(8\nzc)$ radians. The procedure is sketched in
\fig~\ref{fig:dft}.

\section{Time-Continuous ZC Signals}\label{sec:continous}
It is sometimes convenient to think of discrete sequences as if they were the
result of sampling a continuous signal. However, most of the properties of the ZC
sequences are intrinsic to their discrete nature and only hold in a weaker sense
in the continuous case, as we are showing next.

Given the $\nzc$-periodicity of the ZC sequences, it is only natural to look for
signals $x_u(t)$ of period $T$ such that\footnote{Discrete sequences are indexed
with brackets, continuous signals with parentheses.}
\[
x_u[n] = x_u\biggl(\frac{nT}{\nzc}\biggr).
\]

A first option would be the periodic, constant-modulus, chirp-like signal defined~by
\[
x_u^{(\textup{C})}(t) =\begin{cases}
\me^{-\mj \pi u \frac{\nzc t(\nzc t + T)}{\nzc T^2}} & t \in [0, T) \\
x_u^{(\textup{C})}(t \bmod T) &\text{otherwise}.
\end{cases}
\]
It is straightforward to show that $\lim_{t\to T^{-}} x_u^{(\textup{C})}(t) =
x_u^{(\textup{C})}(0)=1$ and that
the signal is thus continuous on $\mathbb{R}$, although not differentiable at
$t=mT$ for all \mbox{$m\in\zz$}. However, signal $x_u^{(\textup{C})}(t)$ is not a
good choice in most applications, since its instantaneous frequency reaches the
value
\[
f_{\max} = u\frac{2\nzc +1}{2T}
\]
which is higher than the sampling rate $f_0 = \nzc /T$. \fig~\ref{fig:chirp}
shows an example chirp signal, with markers placed on the samples corresponding
to the ZC sequence.

\begin{figure}
\centering
\tikz{}
\caption{Real and imaginary parts of $x_u^{(\textup{C})}(t)$ for $\nzc =5$ and
$u=4$. The markers represent the sampled values $x_u[n], n=0,1,\dots,4$.}
\label{fig:chirp}
\end{figure}

A probably more sensitive choice is to build a low-pass signal as an $\nzc$-term
Fourier sum with coefficients $X_u[n]$, the DFT of the ZC sequence $x_u[n]$.
Namely, we define
\[
x_u^{(\textup{LP})}(t) = \frac{1}{\nzc}\sum_{k=-N_0}^{N_0} X_u[k] \me^{\mj 2 \pi
\frac{kt}{T}}
\]
where we introduced $N_0=(\nzc-1)/2$ and where we adopted the usual circular-shift
convention $X_u[-k]=X_u[\nzc-k]$ for all $k=1,2,\dots,N_0$. Note that one can think
of $x_u^{(\textup{LP})}(t)$ as the result of OFDM-modulating the frequency-domain
symbols $X_u[k]$, similarly to what the 5G standard specifies for the PRACH
channel, see \cite[Sections~6.3.3 and~5.3.2]{TS38211}.

As intuition (and the example in \fig~\ref{fig:lowpass}) suggests, this signal
is much smoother than the chirp one considered before, and, by construction, has
a limited bandwidth that is compatible with the sampling rate. Nevertheless, as we
are going to see next, the properties of the ZC sequences only hold in a weaker
sense.

\begin{figure}
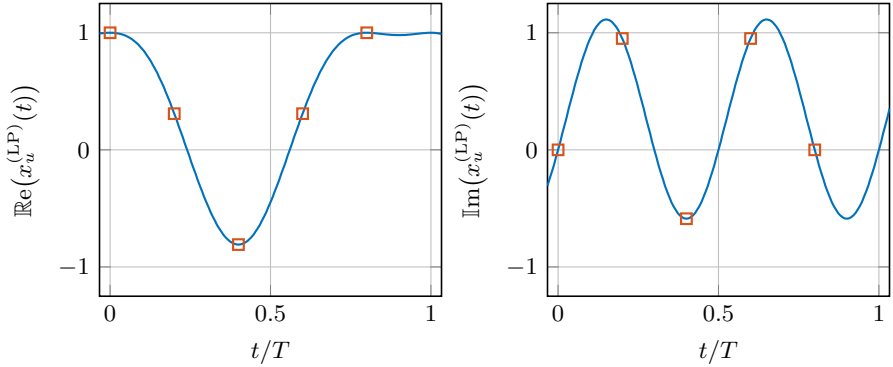

\centering
\tikz{}
\caption{Real and imaginary parts of $x_u^{(\textup{LP})}(t)$ for $\nzc=5$ and
$u=4$. The markers represent the sampled values $x_u[n], n=0,1,\dots,4$.}
\label{fig:lowpass}
\end{figure}

To start with, we readily see that the signal is not unimodular: Note how the
imaginary part of $x_u^{(\textup{LP})}(t)$ takes values larger than one on the
right-hand plot of \fig~\ref{fig:lowpass}. However, it is straightforward to
show that the signal has unitary power:
\begin{multline*}
\frac{1}{T}\int_0^T \abs{x_u^{(\textup{LP})}(t)}^2 \dei t
=\frac{1}{\nzc^2 T}\sum_{k=-N_0}^{N_0} \sum_{l=-N_0}^{N_0} X_u[k] X_u^*[l]
  \int_0^T \me^{\mj 2 \pi \frac{(k-l)t}{T}} \dei t \\
=\frac{1}{\nzc^2}\sum_{k=-N_0}^{N_0} \abs{X_u[k]}^2 = 1.
\end{multline*}

In Appendix~\ref{apdx:time_papr}, we also show that the peak amplitude of
$x_u^{(\textup{LP})}(t)$ increases at most as $O(\ln \nzc)$ with the sequence
length $\nzc$. This bound, which we believe can be made even tighter, already
represents a significant improvement with respect to the plain DFT one of
$O(\sqrt{\nzc})$, thus supporting the use of OFDM-modulated ZC sequences for
signaling applications that require a low peak-to-average power ratio
\cite{arxiv:Andrews2022}.

\begin{figure}
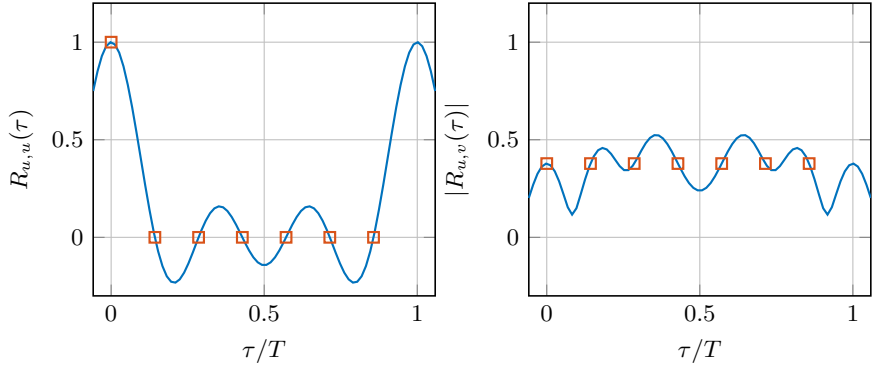

\centering
\tikz{}
\caption{Autocorrelation (left) and cross correlation (absolute value, right) of
time continuous ZC signals, for $\nzc=7$. The markers represent the discrete
case $R_{u,u}[n]$ and $\abs{R_{u,v}[n]}$, for $n=0,1,\dots,6$.}\label{fig:corr}
\end{figure}

The perfect autocorrelation property is also lost when moving to time-continuous
signals. Indeed, as we show in Appendix~\ref{apdx:time_xcorr},
\[
R_{u,u}(\tau) = \frac{1}{T}\int_0^T x_u^{(\textup{LP})}(t)
\bigl(x_u^{(\textup{LP})}(t+\tau)\bigr)^* \dei t
=D_{\nzc}\Bigl(\frac{\tau}{T}\Bigr)
\]
where we have introduced the Dirichlet kernel
\begin{equation}\label{eq:dirichlet}
D_{\nzc}(z) = \frac{1}{\nzc}\sum_{k=-N_0}^{N_0} \me^{-\mj 2 \pi k z}
= \begin{cases}
1 & \text{if } z\in \zz \\
\frac{\sin(\pi \nzc z)}{\nzc \sin(\pi z)} &\text{otherwise.}
\end{cases}
\end{equation}
In plain words, $|R_{u,u}(\tau)|$ still takes its peak value of one when
$\tau = l T$ and vanishes for $\tau = \frac{mT}{\nzc} + l T$, for all $l \in
\zz$ and $m=1,2,\dots,\nzc -1$, but it is not zero for all other values of
$\tau$. An example is reported in the left graph of \fig~\ref{fig:corr}, where
we also marked the points corresponding to the discrete case.\footnote{Note
that, for the continuous case, we normalize the correlations by the period $T$,
while we do not introduce any normalization in the discrete case. Because of
this, the continuous version is scaled by a factor $1/\nzc$ with respect to the
discrete case.}

The cross-correlation between ZC signals with different roots is given by
\[
R_{u,v}(\tau) = \frac{1}{T}\int_0^T x_u^{(\textup{LP})}(t)
\bigl(x_v^{(\textup{LP})}(t+\tau)\bigr)^* \dei t.
\]
It is relatively simple to show that its modulus takes the value $1/\sqrt{\nzc}$
when $\tau=nT/\nzc$, $n=0,1,\dots,\nzc-1$, but a full analysis is not as
straightforward. However, the right graph of \fig~\ref{fig:corr} shows that the
modulus is not constant outside those points.

\section{Summary}
In this paper we have revisited some fundamental properties that make ZC
sequences very appealing for numerous applications in wireless communications.
Specifically, using a consistent mathematical notation, we have reported results
from several different authors, which show that ZC sequences are endowed with
perfect autocorrelation and very low cross-correlation
(see Section~\ref{sec:definition}).

Similarly, Section~\ref{sec:dft_zc} summarizes results related with taking the
DFT of a ZC sequence, which turns out to be a very cheap operation in spite of
the sequence length being a prime number. The section ends by providing a very
simple expression for the zero-indexed coefficient of the DFT [or, equivalently,
for the sum of all the terms of the ZC sequence, see~(\ref{eq:dft0})]: Even
though this result follows from the well-known quadratic Gauss sum, there seems
to be no mention of it in the technical literature, at least to the best of
these authors' knowledge.

Finally, Section~\ref{sec:continous} shows how to obtain a time-continuous
version of a ZC sequence. Although the resulting signal is still well behaved,
the price to pay is a mild crest factor and a loss of sharpness in the
correlation properties.

\appendix

\section{Proofs of Basic Results}\label{apdx:easy_proofs}

The results of this appendix are well known in the literature (see, e.g.,
\cite{J:Chu1972, J:Frank1962, J:Heimiller1961, J:Beyme2009} and references
therein). We report them here for the sake of having a self-contained summary
document with uniform notation.

We start by showing periodicity of ZC sequences.

\begin{claim}\label{claim:periodic}
For all $u$, the ZC sequence $x_u[n]$ in (\ref{eq:zc_full}) is periodic of period $\nzc$.
\end{claim}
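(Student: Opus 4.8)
The plan is to verify periodicity directly, by showing that $x_u[n+\nzc] = x_u[n]$ for every $n \in \zz$, which amounts to controlling the phase increment. Writing the two exponents from (\ref{eq:zc_full}) and subtracting, I would compute the integer
\[
(n+\nzc)(n+\nzc+c+2q) - n(n+c+2q).
\]
Expanding the product (it is convenient to set $m = c+2q$ and expand $(n+\nzc)(n+\nzc+m)$) collapses this difference to $\nzc\,(2n+\nzc+c+2q)$. Consequently
\[
\frac{x_u[n+\nzc]}{x_u[n]} = \me^{-\mj\pi u(2n+\nzc+c+2q)},
\]
and the claim is reduced to showing that the integer $u\,(2n+\nzc+c+2q)$ is even, so that this factor equals $1$.

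For the parity step, I would observe that $2n$ and $2q$ are even, so the parity of $2n+\nzc+c+2q$ coincides with that of $\nzc+c$. Since $c \equiv \nzc \pmod 2$ by definition, one has $\nzc+c \equiv 2\nzc \equiv 0 \pmod 2$; that is, $\nzc+c$ is even whether $\nzc$ is odd or even. Hence $u\,(2n+\nzc+c+2q)$ is an even integer, $\me^{-\mj\pi u(2n+\nzc+c+2q)} = 1$, and $x_u[n+\nzc] = x_u[n]$ for all $n$, which is exactly $\nzc$-periodicity.

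The only point deserving any care is this parity argument, and it is precisely where the constant $c$ pulls its weight: the definition (\ref{eq:zc_full}) builds in $c \equiv \nzc \bmod 2$ for no other reason than to make $\nzc+c$ even, thereby killing the stray $\pi$-phase. Everything else is a one-line expansion of a quadratic polynomial, so I anticipate no genuine obstacle; the write-up is essentially the display above together with the remark on parity.
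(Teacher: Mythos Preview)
Your argument is correct and essentially identical to the paper's: both expand the quadratic in the exponent, cancel the $\nzc$ in the denominator, and finish with the parity observation that $c\equiv\nzc\pmod 2$ forces $\nzc+c$ (respectively $c+k\nzc$ in the paper) to be even. The only cosmetic difference is that the paper treats the shift $n\mapsto n+k\nzc$ for arbitrary $k\in\zz$ in one line, whereas you take $k=1$, which already suffices for periodicity.
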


\begin{proof}
For all $k\in \mathbb{Z}$
\begin{multline*}
x_u[n + k\nzc] = \me^{-\mj \pi u \frac{(n+k\nzc) (n+k\nzc+c+2q)}{\nzc}} \\
= \me^{-\mj \pi u \frac{n (n+c+2q) + (2n+2q+c)k\nzc + k^2\nzc^2}{\nzc}} \\
= \me^{-\mj \pi u \frac{n (n+c+2q)}{\nzc} - \mj \pi u k[2n+2q+c + k\nzc]}
= x_u[n].
\end{multline*}
The last step is straightforward when $k$ is even, while, when $k$ is odd, it
stems from $c + k\nzc$ being even, since $c\equiv\nzc\bmod 2$.
\end{proof}

Next, we prove two claims about autocorrelation and cross-correlation of ZC
sequences.
\begin{claim}\label{claim:autocorr}
For all $\tau\in\mathbb{Z}$,
\[
R_{u,u}[\tau] = \sum_{n=0}^{\nzc-1} x_u[n] x_u^*[n + \tau] = \nzc \delta_{\nzc | \tau}
\]
where $\delta_{\nzc | \tau} = 1$ if $\tau \equiv 0 \pmod \nzc$ and zero otherwise.
\end{claim}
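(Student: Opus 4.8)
The plan is to prove the autocorrelation identity directly by manipulating the exponent of the product $x_u[n]x_u^*[n+\tau]$. First I would write, using the definition~(\ref{eq:zc_full}),
\[
x_u[n] x_u^*[n+\tau] = \me^{-\mj\pi u \frac{n(n+c+2q) - (n+\tau)(n+\tau+c+2q)}{\nzc}},
\]
and expand the numerator. The quadratic terms $n^2$ cancel, and after collecting what remains one finds the exponent is linear in $n$: something of the form $\me^{\mj\pi u \frac{(2n)\tau + \tau^2 + (c+2q)\tau}{\nzc}}$, i.e. a geometric-type factor $\me^{\mj 2\pi u \tau n/\nzc}$ times a constant (in $n$) phase. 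So $R_{u,u}[\tau]$ becomes that constant phase times $\sum_{n=0}^{\nzc-1}\me^{\mj 2\pi u\tau n/\nzc}$.

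Second, I would evaluate the geometric sum $\sum_{n=0}^{\nzc-1}\me^{\mj 2\pi u\tau n/\nzc}$. This equals $\nzc$ when $u\tau\equiv 0\pmod\nzc$ and $0$ otherwise, since in the latter case it is a full sum of a nontrivial $\nzc$th root of unity. Because $\nzc$ and $u$ are coprime, $u\tau\equiv 0\pmod\nzc$ is equivalent to $\tau\equiv 0\pmod\nzc$, so the sum is exactly $\nzc\,\delta_{\nzc\mid\tau}$. Finally, I would check that the constant phase factor equals $1$ precisely when $\tau\equiv 0\pmod\nzc$, so it does not interfere: when $\nzc\mid\tau$ the remaining phase is $\me^{\mj\pi u(\tau^2+(c+2q)\tau)/\nzc}$, and writing $\tau = m\nzc$ this is $\me^{\mj\pi u m(m\nzc + c + 2q)}$, which is $1$ by the same even/odd parity argument used in the proof of Claim~\ref{claim:periodic} (namely $c+m\nzc$ is even because $c\equiv\nzc\bmod 2$). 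Combining the three pieces gives $R_{u,u}[\tau]=\nzc\,\delta_{\nzc\mid\tau}$.

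The only mild subtlety — what I would flag as the main thing to get right rather than a deep obstacle — is the bookkeeping of the constant phase: one must confirm it is genuinely independent of $n$ after the expansion, and that on the support $\nzc\mid\tau$ it evaluates to $1$ rather than to some stray root of unity, which is where the parity of $c$ (and hence the distinction between even and odd $\nzc$) re-enters. Everything else is a routine geometric sum together with the coprimality of $u$ and $\nzc$.
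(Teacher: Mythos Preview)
Your proposal is correct and follows essentially the same route as the paper: expand the product, cancel the quadratic terms to isolate a geometric sum in $n$ times a constant phase, evaluate the sum using coprimality of $u$ and $\nzc$, and then verify via the parity of $c+m\nzc$ that the surviving phase is $1$ on the support $\nzc\mid\tau$. The paper phrases the last step as ``$\tau(\tau+c+2q)$ is always even when $\tau$ is a multiple of $\nzc$,'' which is exactly your parity check.
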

\begin{proof}
By the definition of $x_u[n]$, one obtains
\begin{align*}
R_{u,u}[\tau] 
&= \sum_{n=0}^{\nzc-1} \me^{-\mj \pi u \frac{n(n+c+2q)-(n+\tau)(n+\tau+c+2q)}{\nzc}} \\
&= \sum_{n=0}^{\nzc-1} \me^{\mj \pi u \frac{\tau(\tau+c+2q)+2n\tau}{\nzc}} \\
&=  \me^{\mj \pi u \frac{\tau(\tau+c+2q)}{\nzc}}
  \sum_{n=0}^{\nzc-1} \me^{\mj 2 \pi u \frac{n\tau}{\nzc}} \\
&= \nzc \delta_{\nzc | \tau}
\end{align*}
given the well-known result
\[
\sum_{n=0}^{\nzc-1} \me^{\mj 2 \pi u \frac{n\tau}{\nzc}} =
\begin{cases}
\nzc &\text{if $\nzc$ divides $\tau$} \\
0 &\text{otherwise}
\end{cases}
\]
and the fact that $\tau(\tau+c+2q)$ is always even when $\tau$ is a multiple of
$\nzc$.
\end{proof}

\begin{claim}\label{claim:xcorr}
For all $u,v=1,2,\dots,\nzc-1$, such that $u - v$ is relatively prime to $\nzc$,
and for all $\tau\in\mathbb{Z}$,
\[
\abs{R_{u,v}[\tau]} = \sum_{n=0}^{\nzc-1} x_u[n] x_v^*[n + \tau] = \sqrt{\nzc}.
\]
\end{claim}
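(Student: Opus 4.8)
The plan is to square the quantity of interest, much as one does when deriving the classical modulus bound for a Gauss sum. The direct computation used for the autocorrelation in Claim~\ref{claim:autocorr} no longer helps here: when $u \neq v$, the quadratic-in-$n$ part of the exponent of $x_u[n]x_v^*[n+\tau]$ does not cancel, and it is precisely the squaring that makes it vanish. Concretely, I would first write, using the general form~(\ref{eq:zc_full}) with the shorthand $m = c + 2q$,
\[
\abs{R_{u,v}[\tau]}^2 = \sum_{n=0}^{\nzc-1}\sum_{l=0}^{\nzc-1}
  x_u[n]\,x_u^*[l]\,x_v^*[n+\tau]\,x_v[l+\tau],
\]
and then reindex by setting $l = n - k$. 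Since the ZC sequences are $\nzc$-periodic (Claim~\ref{claim:periodic}), every factor above is $\nzc$-periodic in both $n$ and $l$, so for each fixed $k\in\{0,1,\dots,\nzc-1\}$ the substitution is a bijection on the summation range, and the double sum becomes $\sum_{k=0}^{\nzc-1}\sum_{n=0}^{\nzc-1}x_u[n]\,x_u^*[n-k]\,x_v^*[n+\tau]\,x_v[n+\tau-k]$.

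The second step is a short expansion of the exponents. Using the identity $n(n+m)-(n-k)(n-k+m)=2nk+k(m-k)$ (and the same identity with $n$ replaced by $n+\tau$), the product $x_u[n]x_u^*[n-k]\,x_v^*[n+\tau]x_v[n+\tau-k]$ factors as
\[
\me^{-\mj 2\pi \frac{(u-v)kn}{\nzc}}\cdot
\me^{-\mj\pi\frac{k(m-k)(u-v)-2v\tau k}{\nzc}},
\]
that is, a character in $n$ times a factor depending only on $k$ (and on $\tau,u,v$). Summing over $n$ and invoking the orthogonality relation already used in Claim~\ref{claim:autocorr}, the inner sum equals $\nzc$ when $\nzc\mid(u-v)k$ and vanishes otherwise. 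Because $u-v$ is coprime to $\nzc$, this forces $k\equiv 0\pmod\nzc$, i.e.\ $k=0$ in the range considered; for $k=0$ the $k$-dependent phase is trivially $1$ and the $n$-sum equals $\nzc$. Hence $\abs{R_{u,v}[\tau]}^2=\nzc$, which is the claim.

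I do not expect a genuine obstacle: the argument is routine modular-arithmetic bookkeeping combined with periodicity and character orthogonality. The one place that deserves care is the exponent expansion in the second step, and in particular checking that the $n$-dependence collapses \emph{exactly} to the character $\me^{-\mj 2\pi(u-v)kn/\nzc}$, since it is this collapse that lets the hypothesis $\gcd(u-v,\nzc)=1$ do its work.
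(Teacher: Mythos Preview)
Your argument is correct: the substitution $l=n-k$, the exponent identity $n(n+m)-(n-k)(n-k+m)=2nk+k(m-k)$, and the orthogonality step all check out, and the coprimality hypothesis enters exactly where you say it does. The mechanics of your squared-modulus computation (shift one index by the other, isolate the character in $n$, collapse via orthogonality) are the same as the paper's.

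The one structural difference is that the paper does not square $R_{u,v}[\tau]$ directly. It first manipulates the single sum to obtain the factorisation
\[
R_{u,v}[\tau]=x_v^*[\tau]\sum_{n=0}^{\nzc-1}x_{u-v}[n],
\]
and only then squares the remaining ZC sum $\sum_n x_{u-v}[n]$. Your route is shorter for the bare modulus claim; the paper's detour buys the extra information that the cross-correlation, as a function of~$\tau$, is itself a conjugated ZC sequence scaled by a constant, and that $\bigl|\sum_n x_u[n]\bigr|=\sqrt{\nzc}$ for any root~$u$. A minor side remark: the paper also allows distinct shift parameters $q_u\ne q_v$ in the two sequences, but your computation extends to that case without change since the $n$-dependence of the exponent is unaffected.
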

\begin{proof}
Let
\begin{align*}
x_u[n] &= \me^{-\mj \pi \frac{un(n+c+2q_u)}{\nzc}} &
x_v[n] &= \me^{-\mj \pi \frac{vn(n+c+2q_v)}{\nzc}}.
\end{align*}
Thus, expanding $x_u[\cdot]$ and $x_v[\cdot]$ in the expression of
$R_{u,v}[\tau]$, we obtain
\begin{align*}
R_{u,v}[\tau]
&= \sum_{n=0}^{\nzc-1} \me^{-\mj \pi
  \frac{un(n+c+2q_u)-v(n+\tau)(n+\tau+c+2q_v)}{\nzc}} \\
&= \sum_{n=0}^{\nzc-1} \me^{-\mj \pi \frac{(u-v)n(n+c) + 2 u n q_u -2 v n
  (\tau + q_v) - v \tau (\tau+c+2q_v)}{\nzc}} \\
&= \me^{\mj \pi v \frac{\tau (\tau+c+2q_v)}{\nzc}}
  \sum_{n=0}^{\nzc-1} \me^{-\mj \pi \frac{(u-v)n(n+c)+2n(u q_u - v\tau - v q_v)}{\nzc}} \\
&= \me^{\mj \pi v \frac{\tau (\tau+c+2q_v)}{\nzc}}
  \sum_{n=0}^{\nzc-1} \me^{-\mj \pi (u-v) \frac{n[n+c+2\zeta v(u q_u - v\tau - v q_v)]}{\nzc}} \\
&= \me^{\mj \pi v \frac{\tau (\tau+c+2q_v)}{\nzc}}
  \sum_{n=0}^{\nzc-1} x_{u-v}[n]
\end{align*}
where we have introduced $\zeta = (u-v)^{-1}$ modulo $\nzc$ (recall that, by
assumption, $u-v$ and $\nzc$ are coprime) and where $x_{u-v}[n]$ is a new ZC
sequence with root $u-v$, where $\zeta v(u q_u - v\tau - v q_v)$ plays the role
of the phase parameter $q$. Incidentally, note that the last step also shows
that the cross-correlation between $x_u[n]$ and $x_v[n]$ is nothing else but a
scaled and conjugated version of $x_v[n]$, that is
\begin{equation}\label{eq:xcorr_nice}
R_{u,v}[\tau] = x_v^*[\tau] \sum_{n=0}^{\nzc-1} x_{u-v}[n].
\end{equation}

To finish the proof, one can easily compute the (squared) modulus of the previous
expression. Indeed, after letting $q=\zeta v(u q_u - v\tau - v q_v)$,
\begin{align*}
\abs{R_{u,v}[\tau]}^2 
  &= \sum_{n=0}^{\nzc-1} \me^{-\mj \pi (u-v) \frac{n(n+c+2q)}{\nzc}}
     \sum_{k=0}^{\nzc-1} \me^{\mj \pi (u-v) \frac{k(k+c+2q)}{\nzc}} \\
  &= \sum_{n=0}^{\nzc-1} \sum_{k=0}^{\nzc-1} \me^{-\mj \pi (u-v) \frac{n(n+c+2q)}{\nzc}}
     \me^{\mj \pi (u-v) \frac{k(k+c+2q)}{\nzc}} \\
\intertext{and, since periodicity allows the change of variable $k \mapsto k+n$,}
  &= \sum_{n=0}^{\nzc-1} \sum_{k=0}^{\nzc-1} \me^{-\mj \pi (u-v) \frac{n(n+c+2q)}{\nzc}}
     \me^{\mj \pi (u-v) \frac{(k+n)(k+n+c+2q)}{\nzc}} \\
  &= \sum_{n=0}^{\nzc-1} \sum_{k=0}^{\nzc-1} \me^{\mj \pi (u-v) \frac{k(k+2n+c+2q)}{\nzc}} \\
  &= \sum_{k=0}^{\nzc-1} \me^{\mj \pi (u-v) \frac{k(k+c+2q)}{\nzc}}
      \sum_{n=0}^{\nzc-1} \me^{\mj 2 \pi (u-v) \frac{kn}{\nzc}} \\
  &= \nzc \sum_{k=0}^{\nzc-1} \me^{\mj \pi (u-v) \frac{k(k+c+2q)}{\nzc}} \delta_{k} \\
  &= \nzc.
\end{align*}

As a last remark, note that the combination of the last step and
(\ref{eq:xcorr_nice}) also tells us that
\[
\biggabs{\sum_{n=0}^{\nzc-1} x_{u}[n]} = \sqrt{\nzc}
\]
for all $\nzc$ and $u$ relatively prime. Proposition~\ref{prop:main} in
Section~\ref{sec:dft_zc} allows us to
be more precise when $\nzc$ is a prime number.
\end{proof}

We now prove a simple result that justifies focusing on the simplified ZC
sequences in (\ref{eq:zc}), instead of the general form in (\ref{eq:zc_full}).
\begin{claim}\label{claim:no_q}
Let
\begin{align*}
x_u[n] &= \me^{-\mj \pi u \frac{n(n+c)}{\nzc}} &&\text{and} &
x_u^{(q)}[n] &= \me^{-\mj \pi u \frac{n(n+c+2q)}{\nzc}}.
\end{align*}
Then
\[
x_u^{(q)}[n] = x_u[n+q] \me^{\mj \pi u \frac{q(q+c)}{\nzc}}.
\]
\end{claim}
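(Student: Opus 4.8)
This claim is a direct algebraic identity, so the plan is simply to expand the right-hand side and show it collapses to the left-hand side. First I would write out $x_u[n+q]$ explicitly from the definition, obtaining $\me^{-\mj \pi u \frac{(n+q)(n+q+c)}{\nzc}}$, and then multiply by the correction factor $\me^{\mj \pi u \frac{q(q+c)}{\nzc}}$, combining the two exponents over the common denominator $\nzc$.

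The key step is then to expand the product $(n+q)(n+q+c)$ in the numerator. Carrying this out gives $n^2 + n(2q+c) + q(q+c)$, so that $(n+q)(n+q+c) - q(q+c) = n^2 + n(2q+c) = n(n+2q+c)$. Substituting back, the combined exponent becomes $-\mj \pi u \frac{n(n+c+2q)}{\nzc}$, which is exactly the exponent defining $x_u^{(q)}[n]$. Hence the identity follows with no further work.

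I expect no real obstacle here: the whole argument is a one-line polynomial identity, and the only thing to be slightly careful about is that the manipulation is an exact equality of exponents (not merely an equality modulo $2\pi\mj$), so the identity holds termwise for every $n\in\zz$ without needing any parity or periodicity argument. If one preferred, one could phrase it as a short \texttt{align*} display chaining $x_u[n+q]\,\me^{\mj \pi u \frac{q(q+c)}{\nzc}}$ through the expansion to $x_u^{(q)}[n]$, but even an inline sentence suffices.
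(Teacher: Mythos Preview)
Your proposal is correct and is essentially the same approach as the paper's: both reduce the claim to the polynomial identity $(n+q)(n+q+c)-q(q+c)=n(n+c+2q)$, the paper just arriving at it from the $x_u^{(q)}[n]$ side via a slightly more roundabout factoring, while you expand the right-hand side directly. Your observation that this is an exact equality of exponents (no modular reduction needed) is also apt and matches the paper's silent treatment of that point.
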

\begin{proof}
With straightforward manipulations:
\begin{align*}
x_u^{(q)}[n] &= \me^{-\mj \pi u \frac{n(n+c+2q)}{\nzc}} \\
  &= \me^{-\mj \pi u \frac{(n+q)(n+c+2q)-q(n+c+2q)}{\nzc}} \\
  &= \me^{-\mj \pi u \frac{(n+q)(n+c+q) +q(n+q)-q(n+c+2q)}{\nzc}} \\
  &= \me^{-\mj \pi u \frac{(n+q)(n+c+q)}{\nzc}}
      \me^{\mj \pi u \frac{q(q+c)}{\nzc}} \\
  &= x_u[n+q] \me^{\mj \pi u \frac{q(q+c)}{\nzc}}.
\end{align*}
\end{proof}

Finally, we show how to compute the DFT of a ZC sequence.

\begin{claim}[see \cite{J:Beyme2009}]\label{claim:dft}
The DFT of a ZC sequence of odd prime length $\nzc$, viz.~(\ref{eq:zc}), is
itself a ZC sequence, scaled both in amplitude and time. Namely,
\[
X_u[k] = \sum_{n=0}^{\nzc-1} x_u[n] \me^{-\mj 2 \pi \frac{kn}{\nzc}}
  = X_u[0] x_u^*\bigl[u^{-1}k\bigr]
\]
where $u^{-1}$ is the inverse of $u$ modulo $\nzc$, that is $u^{-1} u \equiv 1
\pmod \nzc$, which is well defined since $\nzc$ is prime.

Moreover, the time-scaled ZC sequence is equivalent to a modulation of the ZC
sequence of root index $u^{-1}$:
\[
x_u^*[u^{-1}k] = x_{u^{-1}}^*[k] \me^{-\mj \pi k\frac{(u^{-1}-1)(\nzc+1)}{\nzc}}.
\]
\end{claim}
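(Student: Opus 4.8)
The plan is to prove the two identities separately; both reduce to elementary manipulations of the quadratic exponent, the only subtlety being that the phase appears as $\pi(\cdot)/\nzc$ rather than $2\pi(\cdot)/\nzc$. Since $n(n+1)$ and $\nzc+1$ are always even, every exponent involved can be rewritten as $2\pi/\nzc$ times an integer, after which equality of phases becomes a congruence modulo $\nzc$, and one may freely use $uu^{-1}\equiv 1\pmod{\nzc}$; the periodicity of the ZC sequence (Claim~\ref{claim:periodic}) is what lets us shift summation indices freely.

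For the first identity I would start from
\[
X_u[k]=\sum_{n=0}^{\nzc-1}\me^{-\mj\frac{\pi}{\nzc}\bigl(un(n+1)+2kn\bigr)},
\]
note that the summand is $\nzc$-periodic in $n$ (it is $x_u[n]$ times the $\nzc$-periodic factor $\me^{-\mj 2\pi kn/\nzc}$), and therefore replace $n$ by $n+a$ for a fixed integer $a$ without changing the sum. Expanding $u(n+a)(n+a+1)+2k(n+a)$ and collecting against $un(n+1)$ leaves a linear term $2(ua+k)n$ and a constant $ua(a+1)+2ka$. Taking any integer $a\equiv -u^{-1}k\pmod{\nzc}$ makes $ua+k$ a multiple of $\nzc$, so the linear term contributes a trivial phase factor $1$; the constant then factors out of the sum, giving $X_u[k]=\me^{-\mj\frac{\pi}{\nzc}(ua(a+1)+2ka)}X_u[0]$. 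It remains to identify the prefactor with $x_u^*[u^{-1}k]$: choosing the representative $a=-b$ with $b\equiv u^{-1}k$, the numerator $ua(a+1)+2ka=u\,b(b-1)-2kb$ is even, so the prefactor equals $\me^{-\mj\frac{2\pi}{\nzc}(u\,b(b-1)/2-kb)}$; and since $u\,b(b-1)/2-kb+u\,b(b+1)/2=ub^2-kb=b(ub-k)\equiv 0\pmod{\nzc}$ (because $ub\equiv k$), this equals $\me^{\mj\frac{2\pi}{\nzc}u\,b(b+1)/2}=\me^{\mj\frac{\pi}{\nzc}ub(b+1)}=x_u^*[b]=x_u^*[u^{-1}k]$.

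The second identity is pure phase bookkeeping. Writing $b=u^{-1}k$, the left-hand side is $x_u^*[u^{-1}k]=\me^{\mj\frac{2\pi}{\nzc}u\,b(b+1)/2}$, whose exponent reduces to $2^{-1}(kb+k)\equiv 2^{-1}(u^{-1}k^2+k)\pmod{\nzc}$ using $ub\equiv k$ and $b\equiv u^{-1}k$. On the right, $x_{u^{-1}}^*[k]=\me^{\mj\frac{2\pi}{\nzc}u^{-1}k(k+1)/2}$ contributes the reduced exponent $2^{-1}u^{-1}k(k+1)$, while the factor $\me^{-\mj\pi k(u^{-1}-1)(\nzc+1)/\nzc}$ has even numerator and contributes $-2^{-1}k(u^{-1}-1)\pmod{\nzc}$, since $(\nzc+1)/2\equiv 2^{-1}\pmod{\nzc}$. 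Their sum is $2^{-1}\bigl(u^{-1}k(k+1)-k(u^{-1}-1)\bigr)=2^{-1}(u^{-1}k^2+k)$, matching the left-hand side, so the two phases agree modulo $2\pi$.

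The algebraic expansions are routine; the step needing genuine care is the constant-term identification in the first identity, together with being consistent throughout about when $u^{-1}$ and $2^{-1}$ denote residue classes versus chosen integer representatives — it is precisely the evenness of $n(n+1)$ and of $\nzc+1$ that permits passing from a $\pi(\cdot)/\nzc$ phase to a $2\pi/\nzc$ phase, and hence working modulo $\nzc$ rather than modulo $2\nzc$.
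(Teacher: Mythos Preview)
Your argument is correct and follows essentially the same route as the paper: for the first identity you perform the index shift $n\mapsto n+a$ with $a\equiv -u^{-1}k$ (the paper writes this as $n\mapsto n-u^{-1}k$) to complete the square and factor out $X_u[0]$, and for the second identity you pass to $2\pi/\nzc$ phases via $2^{-1}\equiv(\nzc+1)/2$ and compare exponents modulo $\nzc$, exactly as the paper does with its $\alpha$. Your treatment is in fact slightly more explicit than the paper's about why one may work modulo $\nzc$ rather than $2\nzc$ (the evenness of $n(n+1)$ and of $\nzc+1$), a point the paper uses but leaves largely implicit.
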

\begin{proof}
As usual, we expand $x_u[n]$ to obtain
\begin{align*}
X_u[k] &= \sum_{n=0}^{\nzc-1} x_u[n] \me^{-\mj 2 \pi \frac{kn}{\nzc}} \\
&= \sum_{n=0}^{\nzc-1} \me^{-\mj \pi u \frac{n(n+1) + 2u^{-1} kn }{\nzc}} \\
\intertext{and, applying the change of variable $n \mapsto n-u^{-1}k$ and
exploiting periodicity,}
&= \sum_{n=0}^{\nzc-1} \me^{-\mj \pi u \frac{(n-u^{-1}k)(n-u^{-1}k+1)
  + 2u^{-1} k(n-u^{-1}k) }{\nzc}} \\
&= \sum_{n=0}^{\nzc-1} \me^{-\mj \pi u \frac{n^2 + n -u^{-2}k^2 - u^{-1} k }{\nzc}} \\
&= \me^{\mj \pi u \frac{u^{-1}k(u^{-1} k + 1) }{\nzc}}
    \sum_{n=0}^{\nzc-1} \me^{-\mj \pi u \frac{n(n+1)}{\nzc}} \\
&= x_u^*\bigl[u^{-1}k\bigr] X_u[0].
\end{align*}

The proof of the second statement is also straightforward. Let $\alpha =
\frac{\nzc+1}{2}$ be the inverse of 2 modulo $\nzc$. Then, since $u^{-1}k(u^{-1}
k + 1)$ is even,
\begin{multline*}
x_u^*\bigl[u^{-1}k\bigr] = \me^{\mj \pi u \frac{u^{-1}k(u^{-1} k + 1) }{\nzc}}
= \me^{\mj 2 \pi u \alpha \frac{u^{-1}k(u^{-1} k + 1) }{\nzc}} \\
= \me^{\mj 2\pi u^{-1} \alpha \frac{k(k + u) }{\nzc}}
= \me^{\mj 2\pi u^{-1} \alpha \frac{k(k + 1) }{\nzc}} \me^{\mj 2 \pi u^{-1}
  \alpha \frac{k(u - 1) }{\nzc}} \\
= \me^{\mj \pi u^{-1} \frac{k(k + 1) }{\nzc}} \me^{-\mj 2 \pi \alpha
  \frac{k(u^{-1} - 1) }{\nzc}}
= x_{u^{-1}}^*[k] \me^{-\mj \pi (\nzc + 1)\frac{k(u^{-1}-1)}{\nzc}}.
\end{multline*}
\end{proof}

\section{Proof of Proposition~\ref{prop:main}}\label{apdx:generalized_quad_gauss}

Proposition~\ref{prop:main} follows straightforwardly from well known
number-theoretic results. We summarize here the main steps.

We start by denoting $\alpha = (\nzc + 1)/2$. Observing that $\alpha \equiv 2^{-1}
\pmod{\nzc}$ and that $n(n+1)$ is even, we can write
\[
X_u[0] = \sum_{n=0}^{\nzc-1} \me^{-\mj \pi u \frac{n(n+1)}{\nzc}}
= \sum_{n=0}^{\nzc-1} \me^{-\mj 2 \pi u \frac{\alpha n(n+1)}{\nzc}}
\]
and, completing the square at the exponent numerator,
\begin{align*}
X_u[0] &= \sum_{n=0}^{\nzc-1} \me^{-\mj 2 \pi u
  \frac{\alpha (n+\alpha)^2 - \alpha^3}{\nzc}} \\
&= \me^{\mj 2 \pi u \frac{\alpha^3}{\nzc}}
  \sum_{n=0}^{\nzc-1} \me^{-\mj 2 \pi u \frac{\alpha (n+\alpha)^2}{\nzc}} \\
&= \me^{\mj 2 \pi u \frac{\alpha^3}{\nzc}}
  \sum_{n=0}^{\nzc-1} \me^{-\mj 2 \pi u \frac{\alpha n^2}{\nzc}}
\end{align*}
where the last step is due to the fact that, modulo $\nzc$, $n + \alpha$ runs
over all values in $\{0, 1, \dots, \nzc-1\}$, same as $n$.

To conclude the proof, we only need the two following well-known results
regarding quadratic Gauss sums.

\begin{proposition}[{\cite[Chapter~6, Theorem~1]{B:Ireland1982}}]\label{prop:gauss_simple}
Let $g_{\nzc} = \sum_{n=0}^{\nzc-1} \me^{\mj 2 \pi \frac{n^2}{\nzc}}$ denote the
quadratic Gauss sum. Then
\[
g_{\nzc} = \begin{cases}
\sqrt{\nzc} & \text{if } \nzc \equiv 1 \pmod{4} \\
\mj\sqrt{\nzc} & \text{if } \nzc \equiv 3 \pmod{4}
\end{cases}
\]
\end{proposition}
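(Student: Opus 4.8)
The plan is the classical two-stage argument: first pin down $g_{\nzc}$ up to sign by elementary algebra, then settle the sign, which is the only genuinely hard point. I will use the standing assumption that $\nzc$ is an odd prime, which permits the passage to a multiplicative-character sum; for a general odd modulus one would additionally invoke the multiplicativity $g_{mn}=g_{m}g_{n}$ over coprime odd $m,n$, but that is not needed here.

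First I would rewrite $g_{\nzc}$ as a Legendre-symbol sum. As $n$ runs over a complete residue system $\{0,1,\dots,\nzc-1\}$, the residue $n^{2}\bmod\nzc$ equals a given $t$ exactly $1+\ell_{t}$ times (once when $t=0$, twice when $t$ is a nonzero square, never otherwise; recall $\ell_{0}=0$). Hence
\[
g_{\nzc}=\sum_{t=0}^{\nzc-1}(1+\ell_{t})\,\me^{\mj 2\pi\frac{t}{\nzc}}
=\sum_{t=0}^{\nzc-1}\me^{\mj 2\pi\frac{t}{\nzc}}+\sum_{t=0}^{\nzc-1}\ell_{t}\,\me^{\mj 2\pi\frac{t}{\nzc}}=G,
\]
since the geometric sum vanishes, where $G$ denotes the character sum on the right.

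Next I would square $G$. Writing $G^{2}=\sum_{a,b=1}^{\nzc-1}\ell_{ab}\,\me^{\mj 2\pi\frac{a+b}{\nzc}}$ (using $\ell_{a}\ell_{b}=\ell_{ab}$) and substituting $b\equiv ac$ for each fixed $a$, so that $\ell_{ab}=\ell_{c}$ and $a+b\equiv a(1+c)$, the inner sum over $a$ equals, by orthogonality of the additive characters, $\nzc-1$ when $c\equiv-1$ and $-1$ otherwise; together with $\sum_{c=1}^{\nzc-1}\ell_{c}=0$ this gives $G^{2}=\ell_{-1}\,\nzc$. Since $\ell_{-1}$ equals $1$ for $\nzc\equiv 1\pmod 4$ and $-1$ for $\nzc\equiv 3\pmod 4$ (the first supplement to quadratic reciprocity), we obtain $g_{\nzc}^{2}=\eta_{\nzc}^{2}\,\nzc$, so $g_{\nzc}=\pm\sqrt{\nzc}$ in the first case and $g_{\nzc}=\pm\mj\sqrt{\nzc}$ in the second; in both cases $g_{\nzc}$ is $\pm$ the claimed value.

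Settling that the sign is always $+$ is \emph{the main obstacle}: it is exactly the step over which Gauss himself was famously stuck for years, whereas everything above is bookkeeping. I would resolve it through the Landsberg--Schaar reciprocity relation, obtained analytically from Jacobi's theta transformation $\theta(-1/\tau)=\sqrt{-\mj\tau}\,\theta(\tau)$ (valid in the upper half-plane with the principal branch of the root, and itself nothing but Poisson summation applied to a Gaussian), by letting $\tau$ approach a rational point from the upper half-plane and regularizing the resulting (merely conditionally convergent) series by an Abel limit. The relevant special case of the reciprocity relation reads
\[
\frac{1}{\sqrt{\nzc}}\sum_{k=0}^{\nzc-1}\me^{\mj 2\pi\frac{k^{2}}{\nzc}}
=\frac{\me^{\mj\pi/4}}{\sqrt{2}}\,(1+\me^{-\mj\pi\nzc/2}),
\]
and evaluating its right-hand side for $\nzc\equiv 1$ and $\nzc\equiv 3\pmod 4$ (where $\me^{-\mj\pi\nzc/2}$ is $-\mj$ and $\mj$ respectively) yields exactly $1$ and $\mj$, i.e.\ $g_{\nzc}=\sqrt{\nzc}$ and $g_{\nzc}=\mj\sqrt{\nzc}$ --- with no residual ambiguity, precisely because the branch of the square root was fixed from the outset. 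As alternative routes to the same sign I would keep in mind Schur's purely algebraic argument (compute the trace of the symmetric DFT matrix $(\me^{\mj 2\pi jk/\nzc})_{j,k}$, whose fourth power is the identity, after fixing the multiplicities of its eigenvalues $\{1,\mj,-1,-\mj\}$ by an independent count) and Dirichlet's finite contour-integral argument; all three give the same answer, but the theta-function derivation seems to me the cleanest to write out in full.
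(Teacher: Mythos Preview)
Your outline is correct and is exactly the classical two-stage argument: the elementary computation $G^{2}=\ell_{-1}\nzc$ pins down $g_{\nzc}$ up to sign, and the Landsberg--Schaar identity (equivalently, the theta transformation law applied at a rational boundary point) then fixes the sign; the evaluation you give of the right-hand side for $\nzc\equiv 1,3\pmod 4$ is accurate. There is, however, nothing in the paper to compare your argument against: the paper does not prove this proposition at all but simply quotes it from Ireland and Rosen as a black box, to be combined with Proposition~\ref{prop:gauss_b} in the derivation of Proposition~\ref{prop:main}. Your proposal therefore supplies strictly more than the paper itself does.
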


\begin{proposition}[{\cite[Chapter~6, Proposition~6.3.1]{B:Ireland1982}}]\label{prop:gauss_b}
For all $b \in \zz$ that is not a multiple of $\nzc$,
\[
\sum_{n=0}^{\nzc-1} \me^{\mj 2 \pi b \frac{n^2}{\nzc}} = \ell_{b} g_{\nzc}
\]
where, we recall, $\ell_b$ is the Legendre symbol of $b$ with respect to $\nzc$.
\end{proposition}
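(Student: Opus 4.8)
The plan is to reduce the sum to the base Gauss sum $g_{\nzc}$ of Proposition~\ref{prop:gauss_simple} by two observations: a reindexing that uses the multiplicative structure modulo the (odd) prime $\nzc$, and a single averaging identity over all admissible $b$. Throughout, write $S(b) = \sum_{n=0}^{\nzc-1} \me^{\mj 2 \pi b n^2/\nzc}$, so that $S(1) = g_{\nzc}$ and the goal is $S(b) = \ell_b g_{\nzc}$ for every $b$ not divisible by $\nzc$.

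First I would note that, for any $c$ coprime to $\nzc$, the map $n \mapsto cn$ permutes the residues modulo $\nzc$; since the summand depends on $n$ only through $n^2 \bmod \nzc$ and $b(cn)^2 = (bc^2) n^2$, this yields $S(bc^2) = S(b)$. Hence $S(b)$ depends only on the coset of $b$ modulo squares in $(\zz/\nzc\zz)^\times$, i.e.\ only on $\ell_b$: it equals $g_{\nzc}$ for every quadratic residue $b$ (write $b \equiv c^2 \pmod{\nzc}$ and use $S(b) = S(1)$), and it equals one common value, say $S_-$, for every quadratic non-residue. What remains is to show $S_- = -g_{\nzc}$.

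To pin down $S_-$, I would sum $S(b)$ over $b = 0, 1, \dots, \nzc - 1$ and exchange the order of summation,
\[
\sum_{b=0}^{\nzc-1} S(b) = \sum_{n=0}^{\nzc-1}\; \sum_{b=0}^{\nzc-1} \me^{\mj 2 \pi b n^2/\nzc}.
\]
Because $\nzc$ is prime, $n^2 \equiv 0 \pmod{\nzc}$ forces $n = 0$, so the inner geometric sum is $\nzc$ when $n = 0$ and $0$ otherwise; thus the double sum equals $\nzc$. Since $S(0) = \nzc$ as well, we get $\sum_{b=1}^{\nzc-1} S(b) = 0$. Splitting these $\nzc - 1$ terms into the $(\nzc-1)/2$ quadratic residues (each contributing $g_{\nzc}$) and the $(\nzc-1)/2$ non-residues (each contributing $S_-$) gives $\frac{\nzc-1}{2}\bigl(g_{\nzc} + S_-\bigr) = 0$, hence $S_- = -g_{\nzc}$. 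Combining the two cases gives $S(b) = \ell_b g_{\nzc}$, as claimed.

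The reindexing and the evaluation of the inner geometric sum are routine; the one substantive point is the non-residue case, and the averaging identity above disposes of it using nothing beyond the fact that an odd prime has equally many quadratic residues and non-residues. A direct alternative would be to write $\ell_b$ as a multiplicative character and manipulate Gauss sums, but the averaging argument is shorter and self-contained.
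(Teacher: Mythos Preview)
Your argument is correct: the reindexing $n \mapsto cn$ shows that $S(b)$ depends only on the quadratic character of $b$, and the averaging identity $\sum_{b=0}^{\nzc-1} S(b) = \nzc$ then forces $S_- = -g_{\nzc}$, giving $S(b) = \ell_b\, g_{\nzc}$ in both cases.

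The paper, however, does not supply its own proof of this proposition; it simply quotes the result from \cite[Chapter~6, Proposition~6.3.1]{B:Ireland1982} and invokes it as a black box in Appendix~\ref{apdx:generalized_quad_gauss} when deriving Proposition~\ref{prop:main}. So there is no approach in the paper to compare against---your self-contained averaging argument is an addition rather than a recapitulation. For what it is worth, the cited reference proceeds slightly differently (it first relates $\sum_n \me^{\mj 2\pi b n^2/\nzc}$ to the character sum $\sum_t \ell_t\, \me^{\mj 2\pi b t/\nzc}$ via the identity $1 + \ell_t = \#\{n : n^2 \equiv t\}$, and then handles the $b$-dependence of the latter), but your route is equally valid and arguably more direct for the purpose at hand.
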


Indeed, by applying Proposition~\ref{prop:gauss_b} with $b=-u\alpha$, we have
\[
X_u[0] = \me^{\mj 2 \pi u \frac{\alpha^3}{\nzc}}
  \sum_{n=0}^{\nzc-1} \me^{-\mj 2 \pi u \frac{\alpha n^2}{\nzc}}
= \me^{\mj 2 \pi u \frac{\alpha^3}{\nzc}} \ell_{-u\alpha} g_{\nzc}
= \me^{\mj 2 \pi u \frac{\alpha^3}{\nzc}} \ell_{-1}\ell_{2u} g_{\nzc}
\]
where we have used the fact that the Legendre symbol is multiplicative,
therefore $\ell_{-u\alpha}=\ell_{-1}\ell_{u\alpha}$, together with $\ell_{\alpha} =
\ell_{2}$, since $\alpha$ is the inverse of $2$. The desired result follows from
Proposition~\ref{prop:gauss_simple}, after recalling that
\[
\ell_{-1} = \begin{cases}
1 & \text{if } \nzc \equiv 1 \pmod{4} \\
-1 & \text{if } \nzc \equiv 3 \pmod{4}.
\end{cases}
\]

\section{Proofs Regarding Time-Continuous Signals}
This appendix is devoted to proving the results about time-continuous ZC
signals. Since we will exclusively focus on the low-pass option, in what follows
we adopt a simplified notation and write $x_u(t)$ instead of
$x_u^{(\mathrm{LP})}(t)$.

\subsection{Peak Amplitude of $x_u(t)$}\label{apdx:time_papr}
To study the amplitude of $x_u(t)$, let us first express it in terms of the
discrete sequence $x_u[n]$:
\begin{align*}
x_u(t) &= \frac{1}{\nzc}\sum_{k=-N_0}^{N_0} X_u[k] \me^{\mj 2\pi\frac{kt}{T}} \\
&= \frac{1}{\nzc}\sum_{k=-N_0}^{N_0} \sum_{n=0}^{\nzc-1} x_u[n]
  \me^{-\mj 2\pi \frac{kn}{\nzc}} \me^{\mj 2\pi\frac{kt}{T}} \\
&= \frac{1}{\nzc} \sum_{n=0}^{\nzc-1} x_u[n] \sum_{k=-N_0}^{N_0}
  \me^{\mj 2\pi k\bigl(\frac{t}{T} - \frac{n}{\nzc}\bigr)} \\
&= \sum_{n=0}^{\nzc-1} x_u[n] D_{\nzc}\Bigl(\frac{t}{T}-\frac{n}{\nzc}\Bigr)
\end{align*}
where $D_{\nzc}(\cdot)$ denotes the Dirichlet kernel as defined in
(\ref{eq:dirichlet}). Then, by the triangle inequality,
\[
\abs{x_u(t)} \le \sum_{n=0}^{\nzc-1}
\Bigabs{D_{\nzc}\Bigl(\frac{t}{T}-\frac{n}{\nzc}\Bigr)}.
\]

Building upon symmetry arguments, one can show that the right-hand side of
the above inequality has period $T/\nzc$ and has maximum points at
$T\frac{2m+1}{2\nzc}$, for $m\in\zz$.
Therefore, we can choose $m=-1$, set $t=-T/(2\nzc)$ on the right-hand side and
exploit the fact that $D_{\nzc}(\cdot)$ is an even function to write
\begin{align*}
\abs{x_u(t)} &\le \sum_{n=0}^{\nzc-1}
  \Bigabs{D_{\nzc}\Bigl(\frac{2n+1}{2\nzc}\Bigr)} \\
&= \sum_{n=0}^{\nzc-1} \frac{\abs{\sin(\pi\frac{2n+1}{2})}}
  {\nzc\sin(\pi\frac{2n+1}{2\nzc})} \\
&= \sum_{n=0}^{\nzc-1} \frac{1}{\nzc\sin(\pi\frac{2n+1}{2\nzc})}.
\end{align*}
Note that there is no need for taking the absolute value at the denominator, since the
argument of the sine function is always inside the interval $[0, \pi)$.

Now, it remains to apply the well-known inequality $\sin z \ge z(1-z/\pi)$,
which holds for $0\le z \le \pi$, and write
\[
\abs{x_u(t)} \le \frac{4\nzc}{\pi}\sum_{n=0}^{\nzc-1}\frac{1}{(1+2n)(2\nzc - 1- 2n)}.
\]
Partial fraction decomposition and a couple of standard summation tricks allow
further simplification:
\begin{align*}
\abs{x_u(t)} &\le \frac{2}{\pi}\sum_{n=0}^{\nzc-1}\frac{1}{1+2n}
  + \frac{2}{\pi}\sum_{n=0}^{\nzc-1} \frac{1}{2\nzc - 1- 2n} \\
&= \frac{4}{\pi}\sum_{n=0}^{\nzc-1}\frac{1}{1+2n}
  \qquad\qquad\text{[$n\mapsto \nzc -1-n$ in the second sum]} \\
&= \frac{4}{\pi}\Biggl(\sum_{n=1}^{2\nzc-1}\frac{1}{n} -
  \sum_{n=1}^{\nzc-1}\frac{1}{2n}\Biggr).
\end{align*}

\begin{figure}
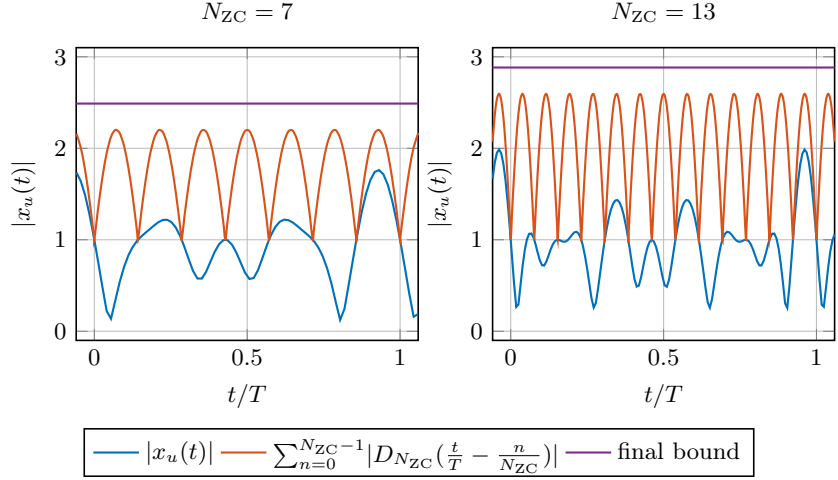

\centering
\tikz{} \\
\tikz{}
\caption{Behavior of $\abs{x_u(t)}$, of the intermediate bound
$\sum_{n=0}^{\nzc-1}\abs{D_{\nzc}(\frac{t}{T} - \frac{n}{\nzc})}$ and of the
final bound $\frac{2}{\pi}\ln(\nzc) + \frac{4}{\pi}\ln(2) +
\frac{2}{\pi}\gamma$, for $\nzc=7$ (left) and $\nzc=13$ (right). The root
sequence index is $u=4$ and $u=8$, respectively, which are the values that
maximize $\max_t\abs{x_u(t)}$.}\label{fig:mod_time}
\end{figure}

Finally, we can apply well-known results about harmonic numbers and show that,
for large $\nzc$,
\begin{align*}
\abs{x_u(t)} &\le \frac{4}{\pi}\ln(2\nzc) - \frac{2}{\pi}\ln(\nzc) +
\frac{2}{\pi}\gamma \\
&= \frac{2}{\pi}\ln(\nzc) + \frac{4}{\pi}\ln(2) + \frac{2}{\pi}\gamma
\end{align*}
where $\gamma$ is the Euler--Mascheroni constant. The final bound, as well as
the intermediate one $\sum_{n=0}^{\nzc-1}\abs{D_{\nzc}(\frac{t}{T} -
\frac{n}{\nzc})}$, is compared to $\abs{x_u(t)}$ if \fig~\ref{fig:mod_time}, for
different values of the sequence length $\nzc$.

\subsection{Autocorrelation of $x_u(t)$}\label{apdx:time_xcorr}
The autocorrelation of $x_u(t)$ is fairly easy to compute and only requires few
steps:
\begin{align*}
R_{u,u}(\tau) &= \frac{1}{T}\int_0^T x_u(t) x_u^*(t+\tau) \dei t \\
&\stackrel{\textup{(a)}}{=} \frac{1}{\nzc^2} \sum_{k=-N_0}^{N_0} \sum_{l=-N_0}^{N_0} X_u[k] X_u^*[l]
  \me^{-\mj 2 \pi \frac{l \tau}{T}} \frac{1}{T}
  \int_0^T \me^{\mj 2\pi \frac{(k-l)t}{T}}\dei t \\
&= \frac{1}{\nzc^2} \sum_{k=-N_0}^{N_0} \sum_{l=-N_0}^{N_0} X_u[k] X_u^*[l]
  \me^{-\mj 2 \pi \frac{l \tau}{T}} \delta_{k,l} \\
&= \frac{1}{\nzc^2} \sum_{k=-N_0}^{N_0} \abs{X_u[k]}^2
  \me^{-\mj 2 \pi \frac{k \tau}{T}} \\
&\stackrel{\textup{(b)}}{=} \frac{1}{\nzc} \sum_{k=-N_0}^{N_0}
  \me^{-\mj 2 \pi \frac{k \tau}{T}} \\
&= \begin{cases}
1 &\text{if } \tau = rT, r\in \zz \\
\displaystyle \frac{\sin\Bigl(\pi \nzc \frac{\tau}{T}\Bigr)}{\nzc\sin\Bigl(\pi\frac{\tau}{T}\Bigr)}
  &\text{otherwise}
\end{cases}
\end{align*}
where, in (a), we just expanded $x_u(t)$, and, in (b), we used the fact that
$\abs{X_u[n]} = \sqrt{\nzc}$, which follows directly from (\ref{eq:dft_def}) and
Proposition~\ref{prop:main}.

\bibliographystyle{IEEEtran}
\bibliography{IEEEabrv,biblio}
\end{document}